\documentclass[%
 aps,
 amsmath,amssymb,
 aps,
]{revtex4}
\usepackage{amssymb}
\usepackage{amsfonts}
\usepackage{amsmath}
\usepackage{epsfig, color}

\setcounter{MaxMatrixCols}{10}

\oddsidemargin 0.0in
\evensidemargin 0.0in
\topmargin -0.5in
\textwidth 6.3in
\textheight 9in

\usepackage{graphicx}
\graphicspath{%
    {converted_graphics/}
    {/}
}
\usepackage{comment}
\usepackage{amsfonts,amssymb,stmaryrd,amsthm}
\usepackage{amsmath}
\usepackage{graphicx}
\usepackage{enumerate}
\usepackage{subfigure}
\usepackage{amscd}
\usepackage{pb-diagram}
\usepackage[toc]{appendix}
\usepackage{afterpage}
\usepackage{bm}

\newcommand{\eproof}{\hfill{\vrule height5pt width5pt depth0pt}\medskip}

\newcommand{\R}{{\mathbb R}}


\newtheorem{theorem}{Theorem}[section]
\newtheorem{lemma}[theorem]{Lemma}

\begin{document}

\title{Traveling wave solutions to Kawahara and related equations}


\author{Stefan C. Mancas}
\email[Electronic address for correspondence: ]{mancass@erau.edu}

\affiliation{Department of Mathematics, Embry-Riddle Aeronautical University,\\ Daytona Beach, FL. 32114-3900, U.S.A.}


\begin{abstract}
Traveling wave solutions to  Kawahara equation (KE), transmission line (TL), and Korteweg-de Vries (KdV) equation are found  by using an elliptic function method  which is more general than the  $\mathrm{tanh}$-method. The method works by assuming  that a polynomial ansatz  satisfies a Weierstrass equation, and  has  two advantages: first, it reduces the number of terms in the  ansatz by an order of two, and second, it uses Weierstrass  functions  which satisfy an  elliptic equation for the  dependent variable instead of the  hyperbolic tangent functions which only  satisfy  the  Riccati equation with constant coefficients.  

When the polynomial ansatz in the traveling wave variable  is of first order, the equation reduces to the  KdV equation with only a cubic dispersion term, while for the KE which  includes a  fifth order dispersion term the polynomial  ansatz must necessary be of quadratic type. 

By solving the  elliptic equation with coefficients that depend on the boundary conditions, velocity of the traveling waves,  nonlinear  strength, and dispersion coefficients, in the case of  KdV equation we find the well-known solitary waves (solitons) for zero boundary conditions, as well as  wave-trains of  cnoidal waves for nonzero boundary conditions. Both solutions are either compressive  (bright)  or rarefactive  (dark), and  either propagate to the left or right with arbitrary  velocity.

In the case of KE  with nonzero boundary conditions and zero cubic dispersion, we  obtain   cnoidal  wave-trains  which  represent   solutions to the TL equation.  For KE with  zero boundary conditions and all the dispersion terms present, we  obtain again solitary waves, while for  KE with all coefficients present and  nonzero boundary condition, the solutions  are written in terms of Weierstrass elliptic functions. For all cases of the KE  we only find  bright waves that  are propagating to the  right  with velocity that is a function of both dispersion coefficients.

\bigskip
\textit{Keywords}\textbf{:} Kawahara equation, KdV equation,  transmission line equation, Jacobi and Weierstrass  elliptic functions, elliptic function method.
\end{abstract}
\maketitle

\section{Introduction}
In recent years many  methods  have been used  to find analytic solutions to nonlinear partial differential equations (PDEs). 
Among the multitude of  papers, we shall only refer to two sets of studies which will pertain to this work. The first class of papers  are:  truncation procedure in the Painlev\'{e} analysis \cite{Weiss} in which  authors  define the Painlev\'{e} property that determines the Lax pairs of the Burgers, KdV, and the modified KdV equations;  Hirota bilinear method \cite{Hirota} where   multiple collisions of $N$  solitons with varying amplitudes  have been obtained for the KdV equation; the Prelle-Singer method \cite{Prelle} where  a system of differential equations has been shown to have an elementary integral expressible in terms of exponentials, logarithms and algebraic functions; the factorization method \cite{Rosu} where traveling wave solutions of the standard and compound KdV-Burgers equations are found using factorizations; or the  homogeneous balance method \cite{Wang} where  solitary wave solutions of two types of variant Boussinesq equations are obtained. Then, we can also enumerate  the trial function method \cite{Kud6} where transformations of solutions obtained by the Weiss-Tabor-Carnevale method are used for investigation of  Kuramoto-Sivashinsky equation;  the nonlinear transformation method \cite{Otw} where the authors  constructed traveling wave solutions for nonlinear diffusion equations with polynomial nonlinearities;  the well-known inverse scattering transform \cite{Newell,Ablo} and  B\"acklund transformation \cite{Miura}; the first integral method  \cite{Nizo} where Nizovtseva uses a  first integral method which gives  singular and kink profiles for the Allen-Cahn hyperbolic equation. 

The second class of papers  mentioned here are: the simplest equation method \cite{Kud5} where Kudrsyashov  uses the general solutions of   simplest nonlinear differential equations and takes  into consideration all possible singularities of Kuramoto- Sivashinsky equation, as well as  the equation for description of nonlinear waves in a convective fluid; the $G'/G$ expansion method \cite{Kud7} where it is shown to be equivalent to the $\tanh$-method first developed by Malfliet and  Hereman \cite{Malf1,Malf2,Malf3}; the automatic method of Parkes \cite{Parkes}; the method of $Q$ functions  \cite{Kud1,Kud8} used on Fisher equation and on a seventh order ODE;  the generalized Riccati equation method \cite{Yan} where a new generalized transformation is  applied to  Whitham-Broer-Kaup (WBK) equation; the  $\sinh$-$\cosh$-method \cite{Was} where the author finds solitons, kinks, and periodic solutions of Benjamin-Bona-Mahony (BBM) equation;  the modified $\tanh$-method \cite{Fan1,Fan2} where the author uses a modified by a parameter Riccati equation; the algebraic method where algorithms using sophisticated  Mathematica programs are used to find closed-form solutions in terms of Jacobi elliptic functions \cite{Bald,Her}; the $\mathrm{sech}$-method \cite{Ma} to find solitons to a seventh order KdV equation. Then we also include the Jacobi elliptic function method \cite{Fan3} used on a double sine-Gordon, Hirota equation, and the coupled Schr\"{o}dinger-KdV system; the work of  Fu and Liu \cite{Fu,Liu} on Jacobi elliptic function expansion method; and  Porubov \cite{Por1,Por2,Por3} on  traveling  periodic solutions of a pair of coupled nonlinear Schr\"{o}dinger 
equations  obtained in terms of  Weierstrass elliptic $\wp$ functions.

More importantly, if for the former class of papers the methods yield   restrictive solutions involving elementary functions which generate solitary waves, singular solutions as rational functions, periodic trigonometric solutions, kinks and fronts, the latter studies involve finding analytical solutions of evolution equations in terms of Jacobi, Weierstrass or  elliptic theta functions. 

Motivated by the work of the authors of the second class of papers, an elliptic function method, which is easier to implement and more
general than the hyperbolic tangent method  is applied to a  nonlinear  dispersive PDE known as Kawahara equation (KE) to find periodic solutions in terms of Weierstrass  $\wp$ elliptic functions, Jacobi elliptic or hyperbolic functions. This equation  takes the form 
\begin{equation}\label{eq:rosenau}
u_{t}+\kappa uu_{x}+\alpha u_{xxx}-\beta u_{xxxxx}=0,
\end{equation}
and was investigated numerically in a study of  magneto-acoustic  waves in a cold collision-free plasma \cite{Kawa5}. The coefficients of this equations are: 
 $\alpha, \beta$ - third and fifth order dispersive terms,  $\kappa$ - the strength of the nonlinearity (wave steepening) and are real constants. We may assume that $\beta>0$,  because by the transformations $u\rightarrow -u,$ $x\rightarrow -x,$ and $t\rightarrow -t,$ we obtain the same equation as (\ref{eq:rosenau}) with the dispersive terms reversed in sign \cite{Kawa5}. Under certain circumstances, the third order dispersion coefficient $\alpha$ becomes very small, or even zero,  so one should include the higher order dispersion $\beta$  which will balance the nonlinear effect $\kappa$ \cite{Kuka,Has}.

We can write Eq. (\ref{eq:rosenau}) in Hamiltonian form \cite{Drazin}
\begin{equation}\label{ea2}
u_t=\frac{\partial}{\partial x}\left(\frac{\delta \mathcal H}{\delta u}\right),
\end{equation}
where the Hamiltonian is
\begin{equation}
\mathcal H=-\frac{1}{6}\kappa u^3+\frac 1 2 \alpha {u_x}^2+\frac 12 \beta {u_{xx}}^2.
\end{equation}
Using this Hamiltonian, KE  (\ref{eq:rosenau}) has the conserved energy density
\begin{equation}
\mathcal I=\int_{-\infty}^{\infty}\mathcal H dx.
\end{equation}
By using the Fr\'{e}chet derivative which corresponds to the Euler-Lagrange operator 
\begin{equation}
\frac{\delta }{\delta u}=\frac{\partial  }{\partial u}-\frac{d}{dx}\frac{\partial }{\partial u_x}+\frac{d^2}{dx^2}\frac{\partial }{\partial u_{xx}},
\end{equation}
then Eq. (\ref{ea2}) becomes  KE (\ref{eq:rosenau}).

By applying a  traveling wave ansatz  $u(\xi)=u(x-ct)$,  with $c$ being the velocity of the unidirectional traveling wave in the $x$ direction at time $t$,  yields a fifth order ordinary differential equation (ODE) in the traveling wave variable $\xi$
\begin{equation}\label{eq0}
-cu_{\xi} +\kappa u u_\xi+\alpha u_{\xi \xi \xi} -\beta u_{\xi \xi \xi \xi \xi}  =0.
\end{equation}
By one integration this reduces  to the fourth order ODE
\begin{equation}\label{eq1aa}
-cu+\frac \kappa 2 u^2 +\alpha u_{\xi \xi} -\beta u_{\xi \xi \xi \xi}  =\mathcal{A}
\end{equation}
with $\mathcal{A}$ an arbitrary  integration constant which can be zero or not depending on the types of boundary conditions chosen.
By multiplying by $u_\xi$ and integrating once  we obtain a conserved quantity  for Eq. (\ref{eq:rosenau}) in the traveling wave variable $\xi$
\begin{equation}\label{lab}
\mathcal C=-2\mathcal A u-cu^2+\frac{\kappa}{3}u^3+\alpha {u_\xi}^2 -\beta \left[2 u_\xi u_{\xi\xi\xi}-(u_{\xi\xi})^2\right]\equiv const.
\end{equation}

In his comment to Assas\rq{} paper \cite{Assa10},  Kudryashov developed the solutions of KE using the tanh-method \cite{Kud3,Kud2}. This method was originally used by  Malfliet and Hereman \cite{Malf1,Malf2,Malf3} and has the advantage  of reducing nonlinear ODEs  into systems of algebraic equations, that might be easier to solve. Kudryashov explained that Eq. (\ref{eq1aa}) does not pass the Painlev\'{e} test, but nevertheless one can find solitary waves of higher order by writing the Laurent series expansion for a function $Y(\xi) $ which must include a pole of order four  \cite{Kud1,Kud4}, where the function in the expansion solves the Riccati equation with constant coefficients 
\begin{equation}\label{ric}
Y_{\xi}=\eta (1- Y^2)
\end{equation}
with solution $Y(\xi) =\mathrm{tanh} (\eta \xi)$. 
Therefore, if we assume solutions of the form 
\begin{equation}\label{tan}
u(\xi)=\sum_{i=0}^n C_i Y^i,
\end{equation}
once the numbers of terms $n$ is determined using the balancing principle \cite{Nick}, we can  write the  solutions of Eq. (\ref{eq1aa}) in terms of the solutions of the Riccati equation (\ref{ric}). Since the hyperbolic tangent solution is a particular solution of the Riccati equation, and any other solution can be found using the transformation $Y=\mathrm{tanh} (\eta \xi)+\frac{1}{W}$, where $W$ satisfies a first order linear equation,  then all the other solutions which are meromorphic to $u$ can be written  using a new expansion  in $W$ with the same number of terms. More than that, all forms of the general solution of the Riccati equation have the same Laurent series  and they differ only  by arbitrary constants \cite{Ere}.  Therefore, using different ODEs  as  generators of particular solutions, one can find a rich class of meromorphic solutions to evolution equations, which  unite  many approaches  involving  elementary functions. These methods are not restrictive to only parabolic equations ($d/dt \rightarrow -c~ d/d \xi$), as they were also successfully implemented to find solutions to hyperbolic PDEs  ($d^2/dt^2 \rightarrow c^2~ d^2/d {\xi}^2$) such as  Boussinesq  \cite{Wang} and improved Boussinesq equation \cite{Kud2,Ablo}, Klein-Gordon \cite{Zhang}, and  Allen-Cahn equation via the first integral method  \cite{Nizo}. Note that the elliptic function method will not work if the ODE contains both even and odd derivative terms, see Lemma V.1.  in the Appendix, for the explanation.

In order to determine the number of terms in the expansion of the ansatz, we compute the second order derivative $d^2/d \xi^2$ for which  the leading term is $\eta^2 (1-Y^2)^2 ~d^2/d Y^2$, while for the fourth order derivative $d^4/d \xi^4$ the leading term is $\eta^4 (1-Y^2)^4~d^4/d Y^4$. Thus, when we balance the nonlinear term with the higher order derivatives, we must distinguish between two different cases. When $\beta =0$ (KdV equation), we only need to  balance  $u_{\xi\xi}$ with $u^2$ which  leads to  $4+(n-2)=2n \Rightarrow n=2$. For the second case (KE), we balance $u_{\xi\xi\xi\xi}$ with $u^2$   which  leads to  $8+(n-4)=2n \Rightarrow n=4$.
Therefore, the solutions for the KdV or KE  must take the form
\begin{equation}\label{tan1}
\begin{aligned}
u(\xi)&=A_0+A_2~ \mathrm{tanh}^2 (\eta \xi), \quad \quad \beta=0\\
u(\xi)&=B_0+B_2~ \mathrm{tanh}^2 (\eta \xi)+B_4 ~\mathrm{tanh}^4 (\eta \xi), \quad \quad \beta\ne0
\end{aligned}
\end{equation}
with the first and third order coefficients identically zero \cite{Kud2}.

Since  KdV equation also possesses non elementary solutions in terms of Jacobi elliptic functions \cite{KdV} which are not solutions of the Riccati equation  (\ref{ric}), we extend the ansatz of the function $Y$ and we replace the Riccati equation  by an elliptic equation, i.e., the function $Y$ in the {\it new} ansatz  given by (\ref{tan}) satisfies
\begin{equation}\label{eq3}
{Y_\xi}^2=a_0+a_1 Y+a_2 Y^2+a_3 Y^3, \quad \quad a_3 \ne0.
\end{equation}
This new ansatz has the advantage of extending the classes of solutions to include  elliptic functions \cite{Fan3,Liu,Kud5}, and  as a special case when two of the roots of the cubic polynomial in $Y$ collide, the solitary waves can be recovered as a limit case of  cnoidal waves \cite{Man2,Man3}. 
The constants $a_i$ which depend on the system parameters $\alpha, \beta, \kappa$,  the speed $c$, and boundary conditions $\mathcal A$  respectively, can be found algebraically after the ansatz passes the balancing principle  \cite{Nick}, which will determine the number of terms $n$ in the expansion given by (\ref{tan}).

Using the new ansatz, and  by balancing, we obtain $n+1=2n\Rightarrow n=1$ when $\beta=0$ and $n+2=2n \Rightarrow n=2$ when $\beta\ne 0$. Therefore, by replacing Riccati  equation  (\ref{ric}) with the elliptic equation (\ref{eq3}), we reduce the numbers of terms in half  in the expansion of Eq. (\ref{tan}) which  is now only linear for KdV or  quadratic for KE 
\begin{equation}\label{tan2}
\begin{aligned}
u(\xi)&=A_0+A_1 Y, \quad\quad \beta=0\\
u(\xi)&=B_0+B_1 Y+B_2 Y^2, \quad\quad \beta \ne 0.
\end{aligned}
\end{equation}
%

\section{Reduction to the  KdV equation ($\beta=0$)}
When $\beta=0$, Eq. (\ref{eq:rosenau})  reduces to the  KdV equation which   describes the motion of  small amplitude and large wavelength shallow waves in dispersive systems \cite{KdV}
\begin{equation}\label{eq:KdV}
u_{t}+\kappa uu_{x}+\alpha u_{xxx}=0. 
\end{equation}  
By using $\beta =0$ in Eq. (\ref{eq1aa}), we obtain the second order ODE in $u$
\begin{equation}\label{eq1aaa}
-cu+\frac \kappa 2 u^2 +\alpha u_{\xi \xi}=\mathcal{A}.
\end{equation}
Without loss of generality we may assume that $A_0=0$, $A_1=1$ so that $u=Y$. To find the solutions of Eq. (\ref{eq1aaa}) using our procedure, we  differentiate Eq. (\ref{eq3}) to obtain  higher derivatives of $Y$
\begin{equation}\label{eqa}
\begin{aligned}
Y_{\xi\xi}&=\frac{1}{2}a_1+a_2 Y+\frac{3}{2}a_3Y^2\\
Y_{\xi\xi\xi}&=(a_2+3a_3Y)Y_{\xi}\\
Y_{\xi\xi\xi\xi}&=3 a_0a_3+\frac 1 2 a_1a_2+\left(\frac 9 2 a_1 a_3+{a_2}^2\right)Y+\frac{15}{2}a_2a_3Y^2+\frac{15}{2}{a_3}^2Y^3,
\end{aligned}
\end{equation} 
and by  substituting them into Eq. (\ref{eq:rosenau}) we obtain a  cubic polynomial in $Y$
\begin{equation}\label{eq3a}
\sum_{i=0}^3 s_i Y^i\equiv0,
\end{equation} with coefficients  $s_i$   given by the expressions
\begin{equation}\label{eq3b}
\begin{array}{l}
s_3=-\frac{15}{2}{a_3}^2 \beta\\
s_2=\frac 1 2 \left[\kappa+3 a_3(\alpha-5 a_2 \beta)\right]\\
s_1=-c+a_2\alpha-{a_2}^2\beta-\frac 92 a_1a_3 \beta\\
s_0=-\mathcal A-3 a_0a_3\beta+\frac{a_1}{2}(\alpha-a_2 \beta).
\end{array}
\end{equation}
Because all these coefficients must be zero, and since $a_3\ne 0$, from the first equation of the system ($\ref{eq3b}$), we immediately conclude that  $\beta=0$, so the reduced constants become
\begin{equation}\label{eq3bb}
\begin{array}{l}
s_2=\frac 1 2 \left(\kappa+3 a_3\alpha\right)\\
s_1=-c+a_2\alpha\\
s_0=-\mathcal  A+\frac{a_1}{2}\alpha .
\end{array}
\end{equation}
By solving  simultaneously $s_i=0$  for the coefficients $a_i$, Eq. (\ref{eq3}) becomes
\begin{equation}\label{kdv}
{Y_\xi}^2=a_0+\frac{2\mathcal A}{\alpha}Y+\frac{c}{\alpha}Y^2-\frac{\kappa}{3\alpha}Y^3 \equiv q_3(Y),
\end{equation} where $a_0$ is an arbitrary constant. According to Eq. (\ref{lab})  the conserved quantity in  $\xi$ for the KdV equation is
\begin{equation}\label{lab2}
\mathcal C=-2\mathcal A Y -cY^2+\frac{\kappa}{3}Y^3+\alpha {Y_\xi}^2  ,
\end{equation}
and by comparing Eqs. (\ref{kdv}) and (\ref{lab2}) we identify $ a_0=\frac{\mathcal C}{\alpha}$, so the arbitrary constant of the elliptic equation is proportional to the conserved quantity, and inverse proportional to the cubic dispersion.  The elliptic solutions of   Eq. (\ref{kdv}) depend on the type of roots of the cubic polynomial $q_3(Y)$, which automatically  leads to  the following sub-cases:

{\bf i)} One zero root of multiplicity two, and one simple root $Y_0=\frac{3c}{\kappa}$. This is achieved when we choose zero  boundary conditions  $\mathcal A=0$ together with $a_0=0$, i.e., the fluid is undisturbed at infinity ($Y,Y_{\xi},Y_{\xi\xi}\rightarrow 0$ as $|\xi|\rightarrow \infty$). By letting $\frac{\kappa}{3 \alpha}=\frac{1}{\sigma} $ we factor Eq. (\ref{kdv})  as
\begin{equation}\label{kdv1}
{Y_\xi}^2=\frac{1}{\sigma}Y^2(Y_0-Y),
\end{equation}
with solution the  solitary wave \cite{KdV,Rey}
\begin{equation}\label{eq00}
Y(\xi)=Y_0\mathrm{sech}^2\left[\frac 1 2 \sqrt{\frac{Y_0}{\sigma}}(\xi-\xi_0)\right]
\end{equation}
which  propagates with velocity  proportional to amplitude $Y_0$, and width inverse proportional to the square root of  $Y_0$.  

Now we must discuss the sign of $\sigma$. First possibility is  that $\sigma>0$ so  $\kappa$ and $ \alpha$ are of the same sign, and because  $Y_\xi \in \R  $ we must have  $Y_0>0$ which means that $c$ is the same sign as $\alpha$ and $ \kappa$.  These waves travel to the right ($c>0$) in a positive strength ($\kappa>0$) and  positive  dispersive medium ($\alpha>0$), and to the left  ($c<0$) in  a negative strength ($\kappa>0$) negative dispersive medium ($\alpha<0$). In both cases their amplitude is always positive and they represent the {\it positive (compressive/ bright)} solitary waves. On the other hand, if $\sigma<0$ so that $\kappa$ and $ \alpha$ are of opposite signs, then we must have  $Y_0<0$. Therefore, these are  waves that travel  to the right ($c>0$) in a negative strength ($\kappa<0$) and  positive  dispersive medium ($\alpha>0$)  medium, and to the left  ($c<0$) in  a positive strength ($\kappa>0$) and negative dispersive medium ($\alpha<0$). In both cases their amplitude is always negative  and they represent  the {\it negative (rarefactive/ dark)} solitary waves \cite{KdV,Rey}. For all other  remaining combination of coefficients, the solitons become unbounded and solutions are unphysical, when the hyperbolic secant becomes periodic with poles aligned on the real $\xi$-axis. For the solitonic case, and regardless of the signs of the coefficients, for $\xi_0=0$  and $u=Y$  the solution is
\begin{equation}\label{kdv3}
u(x,t) = \frac{3c}{\kappa}\mathrm{sech}^2\left[\frac 1 2\sqrt{\frac{c}{\alpha}}(x-ct)\right]
\end{equation}
If the velocity is fixed, the amplitude and width can be manipulated using the strength and dispersion coefficients. When $\kappa$ is big the amplitude is small, and when $\alpha$ is small the  solitons are thin, while increasing the dispersion it increases their widths when solitons spread, see Fig. \ref{fig1a} for the values of $\alpha=1$, $\kappa=1$ and  $\alpha=-5$, $\kappa=-5$ for bright solitons (top  panel), and  $\alpha=1$, $\kappa=-5$ and  $\alpha=-5$, $\kappa=1$ for dark solitons (bottom panel).

\begin{figure}[ht!]
  \begin{center}
\includegraphics[width=0.75\textwidth]{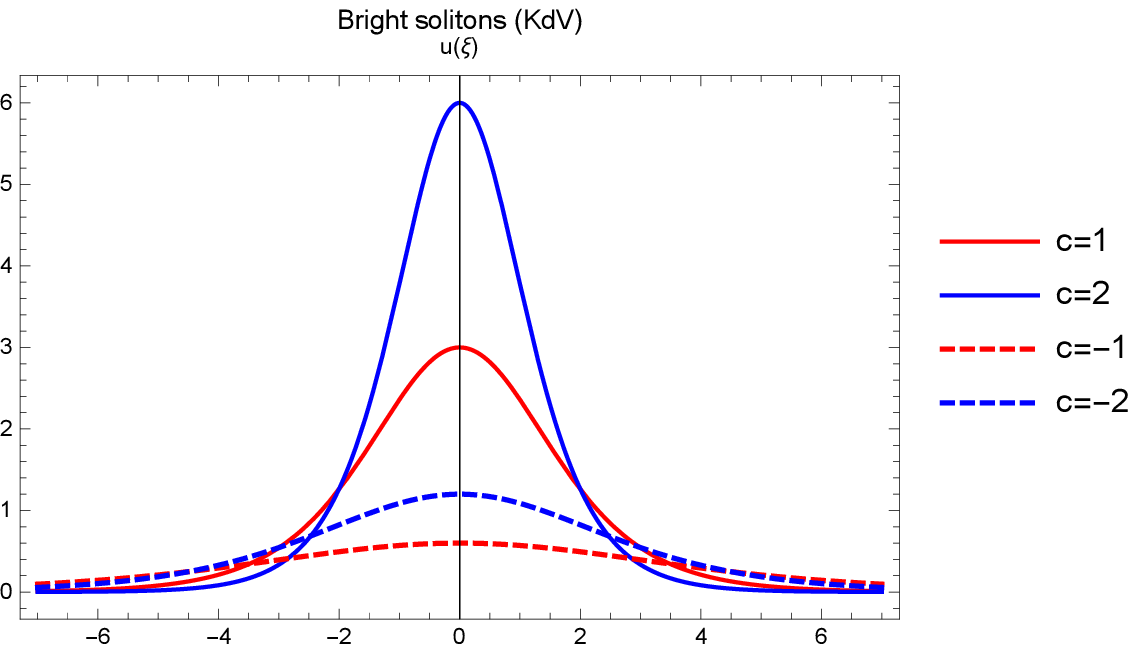}\\
\includegraphics[width=0.75\textwidth]{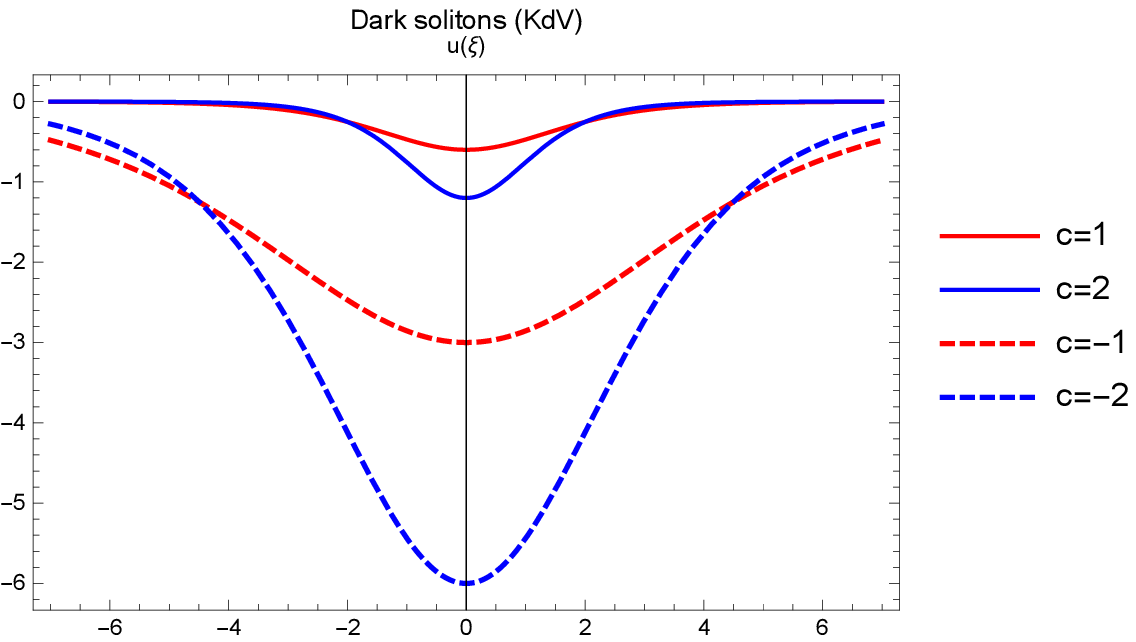}
\caption{Bright solitons for the KdV  Eq. (\ref{eq:KdV}) with zero boundary conditions $\mathcal A=0$ and $\alpha=1,~\kappa=1$ (continuous curves) when solitons propagate to the right, and $\alpha=-5,~\kappa=-5$ (dashed curves) when solitons propagate to the left (top panel). Dark solitons for the KdV  Eq. (\ref{eq:KdV}) with zero boundary conditions $\mathcal A=0$  and $\alpha=1,~\kappa=-5$ (continuous curves) when solitons propagate to the right, and $\alpha=-5,~\kappa=1$ (dashed curves) when solitons propagate to the left (bottom panel).  }
\label{fig1a}
\end{center}
\end{figure}
{\bf ii)} Now we drop the assumption that fluid must be undisturbed at infinity, so then we must have   $Y_\xi=0$ for $Y=0$ which implies that only one root is zero. This  corresponds to setting $ a_0=0$, while the other two roots are real and distinct. Under these assumptions  Eq.(\ref{kdv}) can be factored as
\begin{equation}\label{kdv04}
{Y_\xi}^2=\frac{1}{\sigma}~Y(Y_2-Y)(Y_3+Y),
\end{equation}
where $Y_{2,3}$ satisfy
\begin{equation}\label{eq4bb}
\begin{array}{l}
Y_2=\frac{Y_0\pm \sqrt{\Omega}}{2}\\
Y_3=\frac{-Y_0\pm \sqrt{\Omega}}{2}.
\end{array}
\end{equation} 
 For real roots we require  the discriminant $\Omega=\frac{9c^2+24 \mathcal A \kappa}{\kappa^2} > 0$, which restricts the values for $\mathcal A$  and $\kappa$ such that $\mathcal A \kappa>-\frac{3c^2}{8}$. The  solution of Eq. (\ref{kdv04})  is
\begin{equation}\label{can}
u(\xi)=Y_2~ \mathrm{cn}^2\left[\frac 1 2 \sqrt{\frac{1}{\sigma}(Y_2+Y_3)}(\xi-\xi_0);m\right],
\end{equation} 
which simplifies to
\begin{equation}\label{can0}
u(\xi)=\frac{Y_0\pm \sqrt{\Omega}}{2}\mathrm{cn}^2\left[\frac 1 2 \sqrt{\pm\frac{\sqrt{\Omega}}{\sigma}}(\xi-\xi_0);\sqrt{\frac 1 2 \pm \frac{Y_0}{2 \sqrt{\Omega}}}\right],
\end{equation}
where  $ \mathrm{cn}(\theta;m)$ is the Jacobian elliptic function with modulus $m=\sqrt{\frac{Y_2}{Y_2+Y_3}}$.  Using the values of the roots from system (\ref{eq4bb}), and $\xi_0=0$ the solutions of the KdV equation (\ref{eq:KdV}) with nonzero boundary conditions are 
\begin{equation}\label{Tab1}
  u(x,t) =  \begin{array}{lr}
\frac{3c \pm \sqrt{9c^2+24 \mathcal A \kappa}}{2\kappa} \mathrm{cn}^2\left[\frac 1 2 \frac{\sqrt[4]{9c^2+24 \mathcal A \kappa}}{\sqrt{\pm3\alpha}}(x-ct);\sqrt{\frac 1 2 \pm \frac{3c}{2\sqrt{9c^2+24 \mathcal A \kappa}}}\right].
 \end{array}
\end{equation}
If $\alpha, \kappa$ have same sign then $\sigma>0$, so  $Y_2+Y_3= \sqrt{\Omega}>0$, and we obtain bright cnoidal waves  which propagate to the right  ($c>0$) or to the left ($c<0$), see Fig. \ref{fig2a} (top panel)  for   right waves with $\alpha=1$, $\kappa=2$,  and left waves with $\alpha=-2$, $\kappa=-2$.   Otherwise, when $\alpha, \kappa$ have opposite  sign then $\sigma<0$, so  $Y_2+Y_3= -\sqrt{\Omega}<0$, and we obtain  dark cnoidal waves  which also  propagate to the right  ($c>0$) or to the left  ($c<0$), see Fig. \ref{fig2a} (bottom panel) for right waves with   $\alpha=-1$, $\kappa=2$, and left waves with  $\alpha=2$, $\kappa=-1$.  These solutions represent trains of periodic cnoidal waves  with shape and wavelength that depend on the amplitude of the waves. The wavelength is $\lambda=4 \sqrt{\frac{\pm \sigma}{\sqrt{\Omega}}}K(m)$, where $K(m)$ is the complete elliptic integral of first kind given by $K(m)= \int_0^{\frac{\pi}{2}}\frac{d\theta}{\sqrt{1-m^2 \sin^2\theta}}$. 
\begin{figure}[ht!]
  \begin{center}
\includegraphics[width=0.75\textwidth]{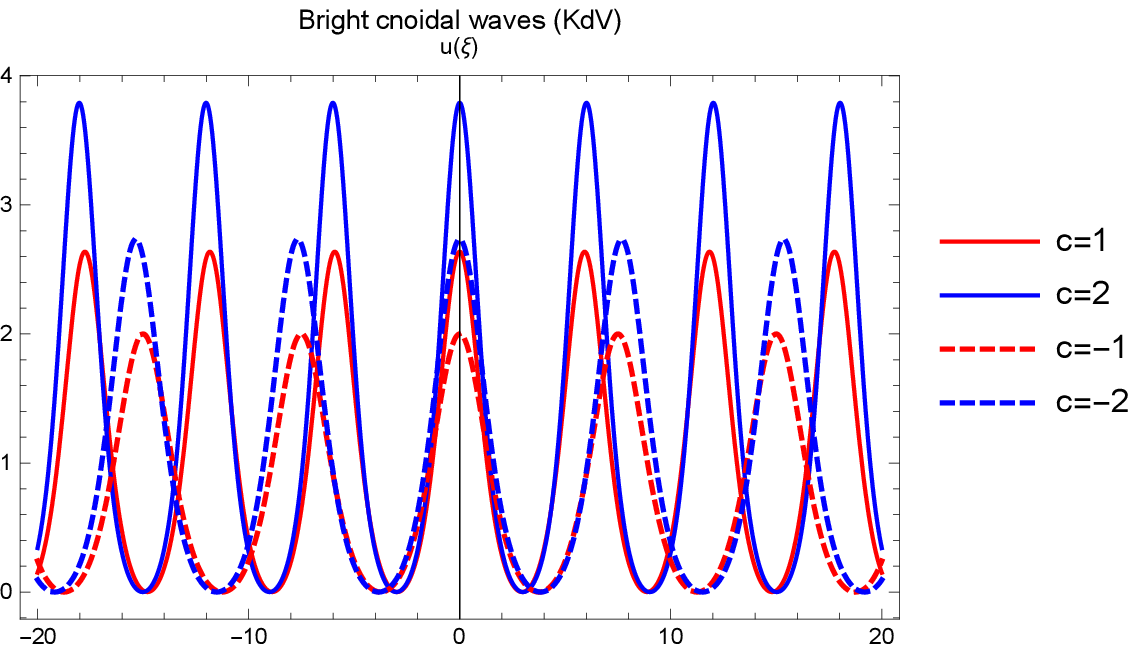}\\
\includegraphics[width=0.75\textwidth]{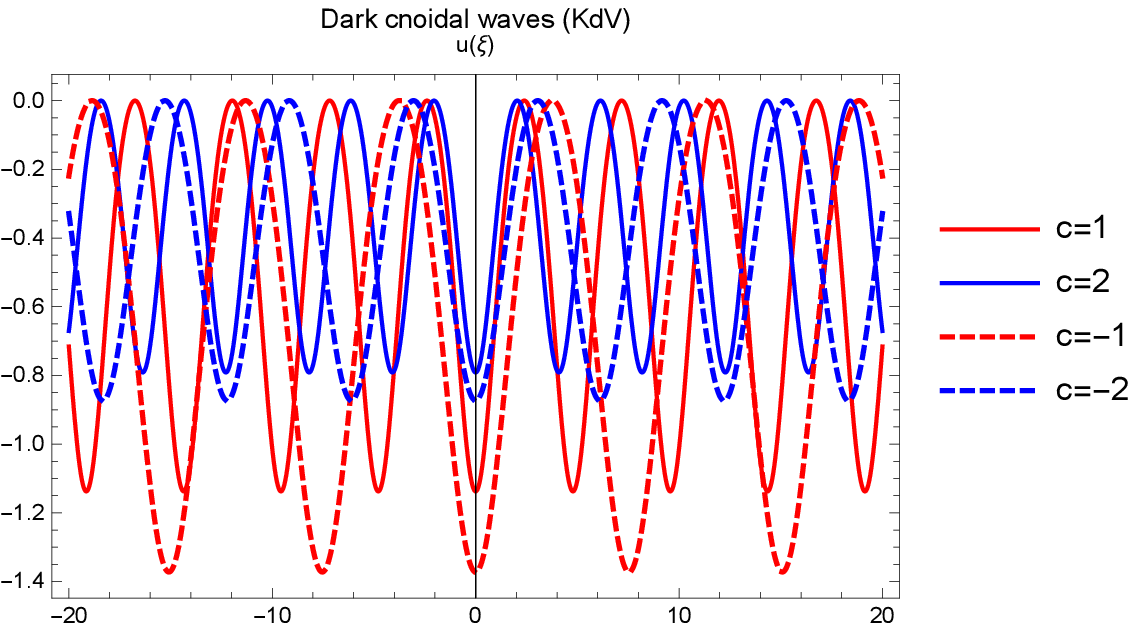}
\caption{Bright cnoidal waves for the KdV  Eq. (\ref{eq:KdV}) with  $\alpha=1,~\kappa=2, ~\mathcal A=1$ (continuous curves) when the waves propagate to the right, and $\alpha=-2,~\kappa=-2 ~\mathcal A=-1$ (dashed curves) when the waves propagate to the left (top panel). Dark cnoidal waves for the KdV  Eq. (\ref{eq:KdV}) with $\alpha=-1,~\kappa=2,  ~\mathcal A=1$ (continuous curves) when the waves propagate to the right, and $\alpha=2,~\kappa=-1, ~\mathcal A=-1$ (dashed curves) when the waves propagate to the left (bottom panel).  }
\label{fig2a}
\end{center}
\end{figure}
When $\mathcal A \rightarrow 0$, and depending on the modulus $m$ we have two extreme cases. First, by  choosing the negative branch of the sqare root $Y_2 \rightarrow 0 \Rightarrow m \rightarrow 0$, and $\mathrm{cn ~\theta} \rightarrow  \cos \theta$, thus cnoidal waves resemble more sinusoidal waves of unchanging shape discovered by Stokes, which in the theory of long waves constitutes a particular case of cnoidal form \cite{Sto,KdV}. Second, by choosing the positive branch of the square root $Y_3 \rightarrow 0 \Rightarrow m \rightarrow 1$, and $\mathrm{cn ~\theta} \rightarrow  \mathrm{sech}~\theta$, thus cnoidal waves loose their periodicity, and  reduce to the solitary waves given by Eq. (\ref{kdv3}).
\section{Kawahara equation ($\beta \ne 0$)}
By using the  second  equation of system (\ref{tan2}) with $B_0=0$, $B_1=0$, $B_2=1$ and $u=Y^2$, the  derivatives of $u$ become
 \begin{equation}\label{sys1}
\begin{array}{l}
u_{\xi\xi}=2({Y_\xi}^2+Y Y_{\xi\xi})\\
u_{\xi\xi\xi}=2(3Y_{\xi}Y_{\xi\xi}+Y Y_{\xi\xi\xi})\\
u_{\xi\xi\xi\xi}=2[3(Y_{\xi\xi})^2+4 Y_{\xi}Y_{\xi\xi\xi}+YY_{\xi\xi\xi\xi}].
\end{array}
\end{equation}
Using the  derivatives of $Y$  from system (\ref{eqa}) in system  (\ref{sys1}), the second and fourth order derivatives of $u$ as function of $Y$  become
 \begin{equation}\label{sys10}
\begin{aligned}
u_{\xi\xi}&=2a_0+3 a_1 Y+4 a_2 Y^2+5 a_3 Y^3\\
u_{\xi\xi\xi\xi}&=\frac 3 2 {a_1}^2+ 8 a_0 a_2+15(a_1a_2+2 a_0a_3)Y+2(8{a_2}^2+21 a_1a_3)Y^2\\
&+65 a_2a_3 Y^3+\frac{105}{2}{a_3}^2 Y^4.
\end{aligned}
\end{equation}
By substituting these derivatives  in Eq. (\ref{eq1aa}) we obtain 
the quartic polynomial in $Y$
\begin{equation}\label{pol}
\sum_{i=0}^4 r_i Y^i \equiv 0,
\end{equation}
with coefficients given by 
 \begin{equation}\label{sys5}
\begin{array}{l}
r_4=\frac 12 (\kappa-105{a_3}^2 \beta)\\
r_3=5a_3(\alpha -13 a_2 \beta)\\
r_2=-c+4a_2 \alpha -16{a_2}^2\beta-42 a_1a_3 \beta\\
r_1=-30 a_0a_3 \beta+3a_1(\alpha-5 a_2 \beta)\\
r_0=-\mathcal A-\frac 3 2{a_1}^2 \beta+2a_0(\alpha-4 a_2\beta).
\end{array}
\end{equation}
Since all coefficients $r_i=0$, by simultaneously  solving  the first four equations of system (\ref{sys5}) we find
 \begin{equation}\label{sys6}
\begin{array}{l}
a_3=\mp\sqrt{\frac{\kappa}{105 \beta}}\\
a_2=\frac{\alpha}{13 \beta}\\
a_1=\sqrt{\frac{5}{21\kappa \beta}}\frac{\mp 36\alpha^2\pm 169c\beta}{2\cdot169 \beta}\\
a_0=\frac{2\alpha(36 \alpha^2-169 c\beta)}{13^3\kappa \beta^2}.
\end{array}
\end{equation}
For real coefficients, and since $\beta>0$,  we require that the  wave steepening  coefficient  $\kappa >0$. By using  these coefficients together  with the last equation of system ($\ref{sys5}$), we find  the integration constant to be
\begin{equation}\label{aaa}
\mathcal A=\frac{(36 \alpha^2-169 \beta c)(2^2\cdot 3^3\cdot 17~ \alpha^2+5\cdot 169 \beta c)}{2^3\cdot7\cdot13^4 \kappa \beta^2}.
\end{equation}
This shows that while for KdV equation the boundary  $\mathcal A$ is {\it arbitrary}, for KE the boundary $\mathcal A$ is {\it fixed} by the system\rq{}s parameters  $\alpha, \beta, \kappa$ and the speed $c$. This means that traveling waves for KE can change shape depending if the speed $c$ is chosen  such a way that $\mathcal A$ is zero or not.
 \subsection{Transmission line equation}
Now we analyze the special case of $\alpha=0$ which  from system (\ref{sys6}) leads to $a_2=a_0=0$ so that  Eq. (\ref{eq:rosenau})  includes a fifth order dispersion term only, and takes the form
\begin{equation}\label{eq:naga}
u_{t}+\kappa uu_{x}-\beta u_{xxxxx}=0.
\end{equation}
 This equation  describes pulses  over a transmission line containing  a large number of LC circuits, and it  occurs by making use of mutual inductance between neighboring  inductors.  It  was first studied by Hasimoto \cite{Has} for shallow water waves near some critical value of surface tension, while  Nagashima \cite{Naga1} performed experiments, and observed the solitary waves using an oscilloscope. Later, in a more general setting,  it was also derived  by Rosenau  using a quasi-continuous formalism that included higher order discrete effects \cite{Ros1,Ros2}.

For $\alpha=0$ Eq. (\ref{eq1aa}) becomes
\begin{equation}\label{eq20aa}
-cu+\frac \kappa 2 u^2 -\beta u_{\xi \xi \xi \xi}  =\mathcal{A},
\end{equation}
so  Eq. (\ref{eq3}) corresponds to 
\begin{equation}\label{eq9}
{Y_{\xi}}^2=Y(a_1+a_3Y^2),
\end{equation}
 which is in fact  a special case of Eq. (\ref{kdv04}) for a real root that has multiplicity two, $Y_2=Y_3$. The reduced  coefficients obtained from system (\ref{sys6})  are given by the expressions
\begin{equation}\label{z0}
\begin{array}{l}
a_3=\mp\sqrt{\frac{\kappa}{105 \beta}} \\
a_1=\pm \frac c2 \sqrt{\frac{5}{21\kappa \beta}},
\end{array}
\end{equation}
and are used to  factor Eq. (\ref{eq9}) to obtain
\begin{equation}\label{fe}
{Y_\xi}^2=-a_3 Y({Y_2}^2-Y^2),
\end{equation}
with ${Y_2}^2=-\frac{a_1}{a_3}=\frac{5c}{2\kappa}$. Since $a_1,a_3$ have opposite signs, then ${Y_2}^2>0$, and because $a_1,a_3 \in \R \Rightarrow \kappa >0$, thus as before, the waves propagate only to the right ($c>0$).
By integrating Eq. (\ref{fe}), and using formula (\ref{can}) we obtain the special cnoidal wave  of constant modulus 
 \begin{equation}\label{can2}
Y(\xi)=Y_2~ \mathrm{cn}^2\left[\frac {\sqrt 2}{2} \sqrt{-a_3Y_2}(\xi-\xi_0);\frac{\sqrt 2}{2}\right].
\end{equation} 
 The wavelength of the waves is  $\lambda=2 \sqrt 2 K\left(\frac {\sqrt 2}{ 2}\right) \sqrt[4]{\frac{42 \beta}{c}}$, where $ K\left(\frac {\sqrt 2}{ 2}\right)=1.85407$, so the periodicity is only a function of the speed $c$ and dispersion $\beta$.  Using the values of $Y_2$, $a_3$, $u=Y^2$ and $\xi_0=0$ the  solution to the transmission line  equation (\ref{eq:naga})  with nonzero boundary condition is
\begin{equation}\label{eq005}
u(x,t)= \frac{5c}{2\kappa} ~\mathrm{cn}^4\left[\frac{\sqrt 2}{2}\sqrt[4]{\frac{c}{42 \beta}}~(x-ct);\frac{\sqrt 2}{2}\right].
\end{equation}
This solution was also obtained  by Yamamoto \cite{Yama7}, Kano \cite{Kano4} and Kudryashov \cite{Kud2},  and represents a train of periodic waves with fixed modulus which tells that the shape is preserved as the pulse travels over the transmission line,  see Fig. \ref{fig4} for the values of $\beta=1,\kappa=2$. 
 \begin{figure}[ht!]
   \begin{center}
\includegraphics[width=0.75\textwidth]{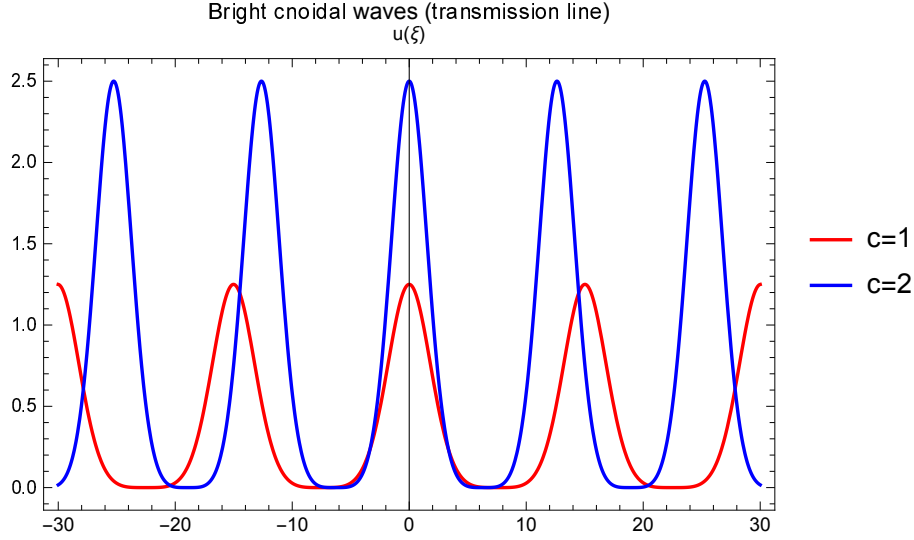}
\caption{Bright cnoidal waves for the transmission line equation  (\ref{eq:naga}) with nonzero boundary conditions  for $\alpha=0, \beta=1, \kappa=2$.}
\label{fig4}
\end{center}
\end{figure} 
\subsection{KE with zero boundary conditions}
Traveling waves with  zero boundary condition $\mathcal A=0$ yield waves which  propagate  with velocities $c=-\frac{2^2\cdot 3^3\cdot 17\alpha^2}{5 \cdot 169 \beta}<0$ or   $c=\frac{36 \alpha^2}{169\beta}>0$. 

{\bf i)} For left  traveling waves, the modular discriminant obtained from the last equation of system (\ref{eq65}) becomes $\Delta=-\frac{6\cdot 1523 \alpha^6}{5^4\cdot 13^6 \beta^6}<0$,  so the polynomial $s_3(t)$ given by Eq. (\ref{eq80}) has non real roots, and because is an  unphysical case it will be omitted. 

{\bf ii)} For right traveling waves,  the system (\ref{sys6}) reduces to 
 \begin{equation}\label{sys6a}
\begin{array}{l}
a_3=\mp\sqrt{\frac{\kappa}{105 \beta}}\\
a_2=\frac{\alpha}{13 \beta}\\
a_1=a_0=0\\
\end{array}
\end{equation}
with corresponding  elliptic equation
\begin{equation}\label{spec}
{Y_\xi}^2=a_3 Y^2\left(Y+\frac{a_2}{a_3}\right),
\end{equation}
Letting $Y\rightarrow -\tilde Y$, and using  solution (\ref{eq00}) we obtain
\begin{equation}\label{eq01}
Y(\xi)= -\tilde Y(\xi)=-\frac{a_2}{a_3}\mathrm{sech}^2\left[\frac 1 2 \sqrt{a_2}(\xi-\xi_0)\right].
\end{equation}
Therefore,  right  traveling wave solution of the  KE  with zero boundary condition and $\xi_0=0$ reduces to
\begin{equation}\label{lab9}
u(x,t)= \frac{105 \alpha^2}{169 \kappa \beta}\mathrm{sech}^4\left[\frac 1 2 \sqrt{\frac{\alpha}{13 \beta}}\left(x-\frac{36 \alpha^2}{169\beta} t\right)\right].
\end{equation}
Since $\beta>0 \Rightarrow \alpha>0$, and since $a_3 \in \R \Rightarrow  \kappa>0$, and  we  obtain bright solitons which only  propagate to the right.  Let us rescale the solution according to the new variables
\begin{equation}\label{sol1}
 \gamma =4 \sqrt{\frac{\kappa}{105 \beta}}, \delta =\frac{4\alpha}{13 \beta} \Rightarrow c=\frac{9 \beta}{4}\delta ^2,
 \end{equation}
 so the solitons take the simpler form
 \begin{equation}\label{x1}
 u(x,t)=\left(\frac{\delta}{\gamma}\right)^2 \mathrm{sech}^4 \left[\frac 1 4 \sqrt{\delta}\left(x-\frac9 4 \delta^2t\right)\right].
 \end{equation}
This  solution  written in the same manner was  also derived  by Yamamoto \cite{Yama7}, Yuan \cite{Yuan0} and Rosenau \cite{Ros2} using  different approaches, and shows once again that the speed of the  wave is proportional to the height and inverse proportional to the width,  as it is the case of the KdV equation, with the difference  that while  for the KdV  the solitons propagate with arbitrary velocity, for KE  they translate with velocity that is fixed by both dispersion coefficients.  In Fig. \ref{fig3} we plot the solitary waves that propagate to the right for the  values of $\alpha =2,\kappa=5$ for which $\gamma=\frac{4}{\sqrt{21 \beta}}$ and $\delta=\frac{8}{13 \beta}$.  Fixing the velocities $c=1,2$ we obtain the values for the dispersion coefficient to  be $\beta=\frac{144}{169}, \frac{72}{169}$.
   \begin{figure}[ht!]
     \begin{center}
\includegraphics[width=0.75\textwidth]{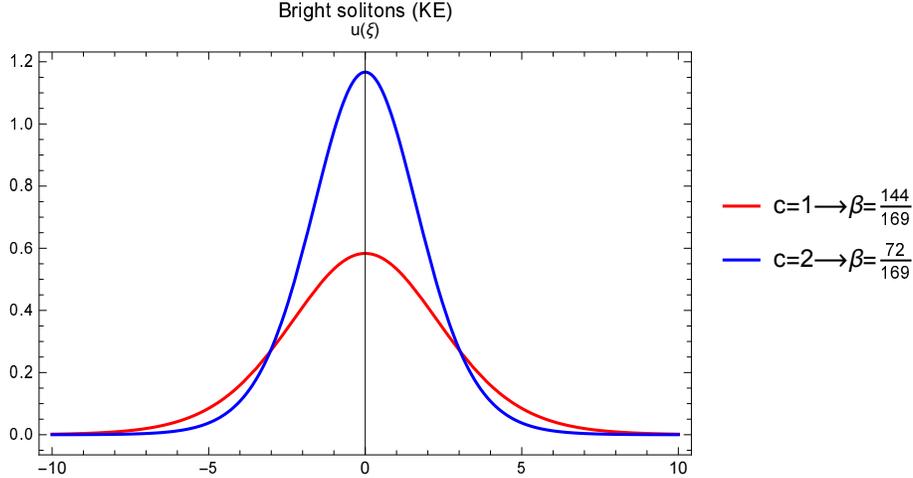}
\caption{Bright solitons for the KE  (\ref{eq:rosenau}) with zero boundary conditions $\mathcal A=0$ and $\alpha=2, \kappa=5$. When $c=1, 2$ then $\beta=\frac{144}{169}, \frac{72}{169}.$ }
\label{fig3}
\end{center}
\end{figure}
\subsection{KE with nonzero boundary conditions}
When all the coefficients from system (\ref{sys6}) are nonzero, we solve Eq. (\ref{eq3})  by reduction to the Weierstrass elliptic equation \cite{Manc}
\begin{equation}\label{eq40}
\wp_{\xi}^2=4 \wp^3-g_2 \wp -g _3,
\end{equation} by the scale-shift linear  transformation 
\begin{equation}\label{eq50}
Y(\xi)=\frac{4}{a_3}\wp(\xi-\xi_0;g_2,g_3)-\frac {a_2}{3a_3}.
\end{equation}
The invariants of the Weierstrass function are given by
\begin{equation}\label{eq60}
\begin{array}{l}
g_2=\frac{{a_2}^2-3 a_1a_3}{12}=2({e_1}^2+{e_2}^2+{e_3}^2)\\
g_3=\frac{3^2 a_1a_2a_3-3^3a_0 {a_3}^2-2{a_2}^3}{16\cdot 3^3}=4e_1e_2e_3,
\end{array}
\end{equation}
and together with the modular discriminant 
\begin{equation}\label{eq70}
\Delta={g_2}^3-3^3 {g_3}^2=16(e_1-e_2)^2(e_1-e_3)^2(e_2-e_3)^2
\end{equation}
are used to classify the solutions of Eq. (\ref{eq3}). The constants $e_i$ are the roots of the cubic polynomial 
\begin{equation}\label{eq80}
s_3(t)=4t^3-g_2t-g_3=4(t-e_1)(t-e_2)(t-e_3)=0,
\end{equation}
and are related to the two periods $\omega_{1,2}$ of the $\wp$ function via the relations $e_{i}=\wp{\left(\frac{\omega_{i}}{2}\right)}$, and $\omega_3=-(\omega_1+\omega_2)$.
Using the constants from system (\ref{sys6})  the invariants and discriminant are
\begin{equation}\label{eq65}
\begin{array}{l}
g_2=\frac{169 \beta c-22 \alpha^2}{2^3\cdot 3^3\cdot 13^3 \beta^2}\\
g_3=\frac{\alpha(3\cdot 169 \beta c-2^7 \alpha^2)}{2^5\cdot 3^3\cdot 5 \cdot 13^3\beta^3}\\
\Delta=\frac{(169 \beta c-36 \alpha^2)(2^2\cdot 3^2\cdot 47\cdot 101\alpha^4+2 \cdot 5^2\cdot 13^4\beta^2 c^2-3\cdot 11 \cdot 13 \cdot 139 \alpha^2 \beta c)}{2^{10}\cdot 3^3\cdot 5^2\cdot 7^3\cdot 13^6\beta^6}.\\
\end{array}
\end{equation}
We now proceed to classify the solutions of Eq. (\ref{eq40}) case-by-case \cite{Steg,Nick}.
\begin{description}
\item [Case (1).] 
We first consider  the degenerate case of $ \Delta = 0\Rightarrow c=\frac{36 \alpha^2}{169 \beta} $  for which  $s_3(t)$  has repeated root $e_i$ of multiplicity two. In this case the  reduced invariants  are
\begin{equation}\label{eq11}
\begin{array}{l}
	g_2=\frac{\alpha^2}{2^2\cdot 3\cdot 169 \beta^2}>0\\
	g_3= -\frac{\alpha^3}{2^3\cdot 3^3\cdot 13^3\beta^3}.\\
\end{array}
\end{equation}
Depending on the sign of $g_3$ we have the sub-cases:

{\bf i)}   $\alpha>0$. By  letting $e_1=e_2=\hat e>0$ with $\hat e=\frac{\alpha}{2^2 \cdot 3\cdot 13 \beta}>0$ then $e_3=-2\hat e=-\frac{\alpha}{78 \beta}<0$, hence
\begin{equation}\label{eq120}
\begin{array}{l}
	g_2=12{\hat e}^2>0\\
	g_3=-8{\hat e}^3<0\\
\end{array}
\end{equation}
 the Weierstrass $\wp$ function is  simplified to
\begin{equation} \label{eq130}
\wp(\xi;12{\hat e}^2,-8{\hat e}^3)=\hat e+3\hat e~ \mathrm{csch}^2(\sqrt{3\hat e}\xi)
\end{equation}
which becomes
\begin{equation} \label{eq140}
\wp(\xi;g_2,g_3)=\frac{\alpha}{2^2 \cdot 3\cdot 13 \beta}\left[1+3 ~\mathrm{csch}^2\left(\frac 1 2 \sqrt{\frac{\alpha}{13 \beta}}\xi\right)\right].
\end{equation}
{\bf ii)}   $\alpha<0$. By  letting $e_2=e_3=-\tilde {e}<0$ with $\tilde {e}=-\frac{\alpha}{2^2 \cdot 3\cdot 13\beta}>0$, then $e_1=2\tilde {e}=-\frac{\alpha}{78 \beta}>0$, hence
\begin{equation}\label{eq1201}
\begin{array}{l}
	g_2=12{\tilde {e}}^2>0\\
	g_3 =8{\tilde {e}}^3>0\\
\end{array}
\end{equation}
 the Weierstrass $\wp$ function is  simplified to
\begin{equation} \label{eq1301}
\wp(\xi;12{\tilde {e}}^2,8{\tilde {e}}^3)=-{\tilde {e}}+3{\tilde {e}}~ \mathrm{csc}^2(\sqrt{3{\tilde {e}}}\xi)
\end{equation} which is
\begin{equation} \label{eq1401}
\wp(\xi;g_2,g_3)=\frac{\alpha}{156 \beta}\left[1-3 ~\mathrm{csc}^2\left(\frac 1 2 \sqrt{\frac{-\alpha}{13 \beta}}\xi\right)\right].
\end{equation}
Using the transformation  (\ref{eq50}) together with $u=Y^2$, the solutions are
\begin{equation}\label{Tab2}
u(x,t) =\left\{
     \begin{array}{lr}
       \frac{105 \alpha^2}{169 \kappa \beta}\mathrm{csc}^4\left[\frac 1 2 \sqrt{\frac{-\alpha}{13 \beta}}\left(x-\frac{36 \alpha^2}{169\beta} t\right)\right]&\quad\quad;\alpha<0\\
       \frac{105 \alpha^2}{169 \kappa \beta}\mathrm{csch}^4\left[\frac 1 2 \sqrt{\frac{\alpha}{13 \beta}}\left(x-\frac{36 \alpha^2}{169\beta} t\right)\right]&\quad\quad ;\alpha>0
     \end{array}
   \right.
\end{equation}
 Notice that these solutions are  the second linearly independent set  obtained from the reduction of the $\wp$ function corresponding exactly to the solutions of KE with zero boundary conditions of  (\ref{lab9}), since for $c=\frac{36 \alpha^2}{169 \beta}\Rightarrow \mathcal A=0$. Because  these functions are unbounded, the traveling waves  are unphysical  so they will also be omitted.

\item [Case (2).] If $\Delta\ne 0$ we find traveling waves with arbitrary velocity and we include   two  particular solutions which will fix the velocities  of the traveling waves as functions of  dispersion coefficients as follows:  the equianharmonic ($g_2=0$),  and lemniscatic case ($g_3=0$) respectively.

{\bf i)} For general  solution $g_2 \ne0 $, $g_3\ne0$  the waves travel with arbitrary velocity $c$  and the solutions may be reduced  in a manner similar to the simplification for the lemniscatic  case below.

{\bf ii)} For the equianharmonic case $g_2=0\Rightarrow c=\frac{22 \alpha^2}{169 \beta}$, thus $g_3=-\frac{31 \alpha^3}{2^4\cdot 3^3\cdot 5\cdot 13^3 \beta^3}$, and discriminant is $\Delta=-27 {g_3}^2<0$ with solution to Eq. (\ref{eq40}) given by 
\begin{equation}\label{sol3}
\wp(\xi-\xi_0;0,g_3)=\wp\left(\xi-\xi_0;0,-\frac{31 \alpha^3}{2^4\cdot 3^3\cdot 5\cdot 13^3  \beta^3}\right).
\end{equation} 
Since $\Delta<0$ then the polynomial $s_3(t)$ given by Eq. (\ref{eq80}) has non real roots, and this case will be omitted being unphysical as well. Using the transformation (\ref{eq50}) together with $u=Y^2$, the general and equianharmonic  solutions are
\begin{equation}\label{Tab5}
u(x,t) =\left\{
     \begin{array}{lr}
          \frac{35}{3\cdot 13^2\kappa \beta}\left[\alpha-2^2 \cdot 3\cdot 13 \beta\wp\left(x-c t;\frac{13^2 c\beta-22 \alpha^2}{2^3\cdot 3^3\cdot 13^3  \beta^2},\frac{\alpha(3\cdot 13^2\beta c-2^57\alpha^2)}{2^5\cdot 3^3\cdot 5 \cdot 13^3\beta^3}\right)\right]^2 \\

      \frac{35}{3\cdot 13^2\kappa \beta}\left[\alpha-2^2 \cdot 3\cdot 13\beta\wp\left(x-\frac{22 \alpha^2}{13^2 \beta} t;0,-\frac{31 \alpha^3}{2^4\cdot 3^3\cdot 5\cdot 13\beta^3}\right)\right]^2.
     \end{array}
     \right.
\end{equation}

{\bf iii)} For the lemniscatic case $g_3=0\Rightarrow c=\frac{2^7 \alpha^2}{3\cdot 169 \beta}$, thus $g_2=\frac{31 \alpha^2}{2^2\cdot 3^2\cdot 7\cdot 169 \beta^2}>0$, and discriminant is $\Delta={g_2}^3>0$, with solution to Eq. (\ref{eq40}) given by
\begin{equation}\label{sol4}
\wp(\xi-\xi_0;g_2,0)=\wp\left(\xi-\xi_0;\frac{31 \alpha^2}{2^2\cdot 3^2\cdot 7\cdot 13^2  \beta^2},0\right).
\end{equation}
In this case the roots of $s_3(t)$ are real and are given by $e_3=-\frac{\sqrt {g_2}}{2}$, $e_2=0$, $e_1=\frac{\sqrt {g_2}}{2}$. Although the Weierstrass unbounded function given by (\ref{sol4}) has poles aligned on the real axis of the $\xi-\xi_0$ complex plane, we can choose $\xi_0$ in such a way to shift these poles a half of period above the real axis, so that the elliptic function simplifies using the formula \cite{Kano4,Whitt}
\begin{equation}\label{sol5}
\wp(\xi;g_2,0)=e_3+(e_2-e_3)\mathrm{sn}^2[\sqrt{e_1-e_3}(\xi-\xi_0\rq{});m]
\end{equation}  with elliptic modulus $m=\sqrt{\frac{e_2-e_3}{e_1-e_3}}.$
Using the values of the roots together with $\xi_0\rq{}=0$ we obtain
\begin{equation}\label{sol6}
\wp(\xi;g_2,0)=-\frac{\sqrt{g_2}}{2}\mathrm{cn}^2\left[\sqrt[4]{g_2}\xi;\frac{\sqrt{2}}{2}\right],
\end{equation}
thus, the solutions for the lemniscatic case are reduced using the transformation (\ref{eq50}) to cnoidal waves, and they become
\begin{equation}\label{Tab4}
u(x,t) =
     \begin{array}{lr}
        \frac{5\alpha^2}{3\cdot 7 \cdot 13^2 \kappa \beta}\left\{7+\sqrt{7 \cdot 31}\mathrm{cn}^2\left[\sqrt[4]{\frac{31}{7}}\sqrt{\frac{\alpha}{78 \beta}}\left(x-\frac{2^7 \alpha^2}{3\cdot 13^2\beta}t\right);\frac{\sqrt 2}{2}\right]\right\}^2
     \end{array}
\end{equation}
In Fig. \ref{fig5} we present cnoidal wave-trains for the lemniscatic case with nonzero boundary conditions, and coefficients $\alpha=1, \kappa=1$. Fixing the velocities to $c=1,2$  the fifth order dispersion coefficient  is $\beta=\frac{128}{507},\frac{64}{507}$.
  \begin{figure}[ht!]
  \begin{center}
\includegraphics[width=0.75\textwidth]{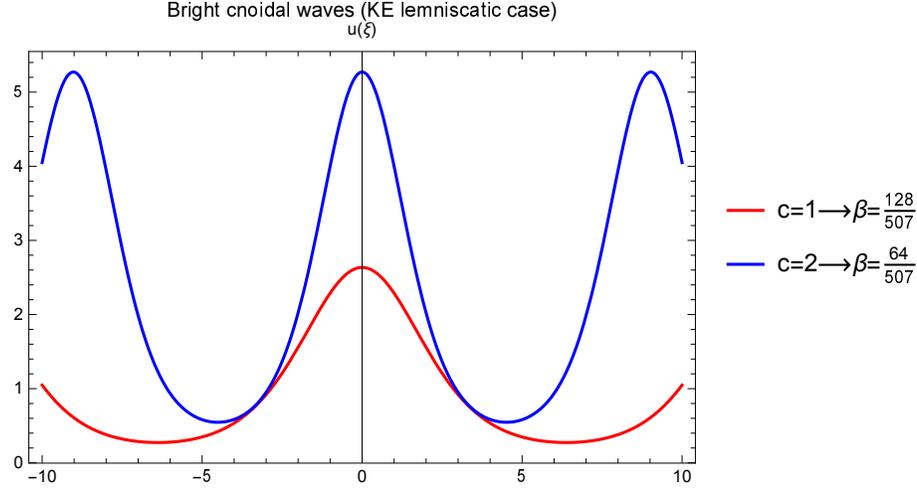}
\caption{Bright cnoidal waves for the KE  (\ref{eq:rosenau}) with nonzero boundary conditions for  $\alpha=1, \kappa=1$. When $c=1, 2$ then $\beta=\frac{128}{507}, \frac{64}{507}.$ }
\label{fig5}
\end{center}
\end{figure}
\end{description}

\section{Conclusion}
In this paper we applied the  generalized elliptic function method to find traveling wave solutions to  Kawahara, transmission line, and Korteweg-de Vries equations, which has the advantage of finding  solutions to nonlinear evolution equations as polynomial combinations of elliptic functions. Depending on the boundary conditions, these solutions can be reduced, if necessary, to the hyperbolic, periodic, or Jacobian elliptic functions. 

By assuming a polynomial ansatz that  satisfies an elliptic equation,  in the case of  KdV equation, we find the well-known solitary waves,  as well as  wave-trains of  cnoidal waves which are either compressive  or rarefactive that  propagate  in both directions with arbitrary velocity.

In the case of KE  the traveling wave  solutions  are written in terms of Weierstrass elliptic functions  which can  be reduced to the hyperbolic (for zero boundary conditions) or Jacobi elliptic functions (for nonzero boundary conditions). For the general case the Weierstrass elliptic functions are unbounded, while for the lemniscatic case, they reduce to periodic cnoidal waves.

 While  for the KdV  equation the solitary waves that are both compressive and rarefactive   propagate with arbitrary velocity, for KE  only compressive  waves are found that propagate to one direction with a velocity that depends on both dispersion coefficients. 

\begin{acknowledgements}
The author  would like to acknowledge Professor  Pisin Chen from  Leung Center for Cosmology and Particle Astrophysics (LeCosPA) for support during his stay in Taipei, and  Professors Juan-Ming Yuan and Haret C. Rosu for their helpful comments and discussions on Kawahara equation. 
\end{acknowledgements}

\section{Appendix}

\begin{lemma}
Traveling wave solutions to KE (\ref{eq:rosenau}), satisfy  only the Weierstrass elliptic equation with cubic nonlinearity.
\end{lemma}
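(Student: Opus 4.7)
The plan is to substitute the general polynomial ansatz into the integrated KE and, via a balancing argument applied to the resulting polynomial identity in $Y$, force the auxiliary function $Y$ to satisfy a cubic elliptic equation, which is then brought to the Weierstrass canonical form by a linear change of variable. I begin with
\[
u(\xi) = \sum_{i=0}^{n} B_i Y^i, \qquad Y_\xi^2 = P_N(Y) := \sum_{i=0}^{N} a_i Y^i,
\]
with $B_n \ne 0$ and $a_N \ne 0$, substituted into (\ref{eq1aa}).

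First I would verify that the substitution reduces to a single polynomial identity in $Y$. Differentiating $Y_\xi^2 = P_N(Y)$ once gives $Y_{\xi\xi} = \tfrac12 P_N'(Y)$; differentiating twice more and eliminating $Y_\xi^2$ via the elliptic identity yields $Y_{\xi\xi\xi\xi} = \tfrac12 P_N'''(Y)P_N(Y) + \tfrac14 P_N''(Y)P_N'(Y)$, polynomial of degree $2N-3$. By induction every even-order derivative of $Y$ is polynomial in $Y$, while every odd-order derivative carries a factor $Y_\xi$ and is irrational. Because (\ref{eq1aa}) contains only $u, u^2, u_{\xi\xi}$, and $u_{\xi\xi\xi\xi}$, the substitution collapses to the single polynomial identity $\sum r_i Y^i \equiv 0$; had the ODE included odd-order derivatives, the identity would split into two independent polynomial identities---one rational in $Y$, one proportional to $Y_\xi$---over-determining the coefficient system (this is the content of the remark in the main text preceding the lemma). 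A chain-rule bookkeeping then gives $\deg u_{\xi\xi} = n + N - 2$ and $\deg u_{\xi\xi\xi\xi} = n + 2N - 4$.

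Second, I would apply the balancing principle and rule out $N \ne 3$. The highest-degree contributions to (\ref{eq1aa}) come from $\tfrac{\kappa}{2}u^2$ (degree $2n$) or $\beta u_{\xi\xi\xi\xi}$ (degree $n + 2N - 4$); equating them for a nondegenerate balance gives $n = 2N - 4$, and the coefficient of $Y^{2n}$ produces
\[
\tfrac{\kappa}{2}B_n^2 - \beta B_n a_N^2 \, C(N,n) = 0, \qquad C(N,n) = \tfrac{1}{4}\,n(2n{+}N{-}2)(n{+}N{-}2)(2n{+}3N{-}6),
\]
recovering $a_3 = \mp\sqrt{\kappa/(105\beta)}$ of system (\ref{sys6}) at $N = 3$, $n = 2$. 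For $N \le 2$ the equation $Y_\xi^2 = P_N(Y)$ is solved by elementary functions (hyperbolic, trigonometric, or rational) rather than elliptic ones, and the leading balance degenerates to $B_n = 0$. For $N = 4$, the elliptic curve $Y_\xi^2 = P_4(Y)$ is biregularly equivalent to some cubic $W_\xi^2 = \tilde P_3(W)$ via the M\"{o}bius substitution $Y = r + 1/W$ sending a root $r$ of $P_4$ to infinity, under which the quartic $u$-ansatz in $Y$ rewrites as a polynomial in $1/W$ that pulls back to the cubic canonical form. For $N \ge 5$ the curve is hyperelliptic of genus $\ge 2$ and cannot match the genus-one structure demanded by the four-term reduced KE; cascading the subleading coefficient equations (the coefficients of $Y^{2n-1}, Y^{2n-2}, \ldots$) forces $a_N = a_{N-1} = \cdots = a_4 = 0$. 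Hence $N = 3$.

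Finally, with $Y_\xi^2 = a_0 + a_1 Y + a_2 Y^2 + a_3 Y^3$ and $a_3 \ne 0$, the scale-shift transformation (\ref{eq50}) places $Y$ in the Weierstrass canonical form (\ref{eq40}) with invariants $g_2, g_3$ as in (\ref{eq60}), establishing the cubic nonlinearity asserted by the lemma. The main obstacle is the third step: both the $N = 4 \to N = 3$ M\"{o}bius reduction---requiring explicit verification that the $u$-ansatz transforms cleanly under the rational change of variable---and the $N \ge 5$ cascade---requiring control of enough subleading coefficient equations to force $a_N$ down through $a_4$ to vanish---are conceptually clear but combinatorially the most involved part of the argument.
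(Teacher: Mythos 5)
Your first two steps already contain everything the paper's proof does, and a little more. The paper takes the ansatz degrees as fixed beforehand by the balancing in the Introduction ($u=Y$ for KdV, $u=Y^2$ for KE, i.e.\ $n=1$ or $n=2$), writes only the leading term of the relevant derivative ($u_{\xi\xi}=\tfrac12 m a_m Y^{m-1}$, resp.\ $u_{\xi\xi\xi\xi}=\tfrac12 m(m+2)(3m-2)a_m^2Y^{2(m-1)}$), and matches degrees against $u^2$ to obtain $m=3$ in both cases; it closes with the same remark you make about odd-order derivatives spoiling the reduction to a single polynomial identity. Your general leading coefficient $C(N,n)$ is correct (it equals $\tfrac{105}{2}$ at $N=3$, $n=2$, reproducing $r_4$ in (\ref{sys5})), and your balance $n=2N-4$ specializes to the paper's computation at $N=3$.

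The gap is in your third step, which attempts the stronger claim that every pair $(n,N)$ with $N\ne3$ is excluded. First, for $N=4$ the M\"obius substitution $Y=r+1/W$ sends $u=\sum_{i=0}^{4}B_iY^i$ to a Laurent expression with a pole of order four at $W=0$, not to a quadratic polynomial in $W$, so the quartic case does not ``pull back to the cubic canonical form'' of the ansatz as you assert; the correct statement is that an $N=4$ parametrization, where consistent, yields the same meromorphic solutions (poles of $u$ of order four) as $N=3$ and is therefore redundant rather than contradictory, which is a different claim requiring its own argument. Second, for $N\ge5$ the cascade $a_N=\cdots=a_4=0$ is asserted but never derived: the leading equation $\tfrac{\kappa}{2}B_n=\beta a_N^2C(N,n)$ is solvable with $a_N\ne0$, so any exclusion must come from the subleading coefficient equations or from a local branching analysis of $Y\sim(\xi-\xi_0)^{-2/(N-2)}$, neither of which you carry out; the genus remark is a heuristic, not a proof. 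None of this extra machinery is needed for the lemma as the paper proves it, since there $n$ is fixed in advance and the single degree match forces $m=3$.
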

\begin{proof}
 Letting 
\begin{equation}\label{mew}
{Y_\xi}^2=\sum_{i=0}^m a_i Y^i, \quad \quad a_M \ne 0.
\end{equation}
For KdV Eq. (\ref{eq:KdV}) $u=Y$  and the leading term for the second order  derivative term is $u_{\xi\xi} =\frac 1 2 m a_m Y^{m-1}$. By matching the terms $u^2$ with  $u_{\xi\xi}$ in Eq. (\ref{eq1aaa}) we obtain $m=3$. For KE (\ref{eq:rosenau}) $u=Y^2$ and  the  leading term for the fourth order  derivative term is $u_{\xi\xi\xi\xi}=\frac 1 2 m(m+2)(3m-2) {a_m}^2 Y^{2(m-1)}$. By matching the terms $u^2$ and $u_{\xi\xi\xi\xi}$ in Eq. (\ref{eq1aa}) we also obtain $m=3$. Notice that this method will not work if an ODE contains both even and odd derivative terms. 
\end{proof}\eproof



\end{document}